\documentclass[conference]{IEEEtran}

\usepackage{graphicx}
\usepackage{amssymb}
\usepackage{amsmath}
\usepackage{caption}
\usepackage{subcaption}
\usepackage{algorithm}
\usepackage{amsthm}
\usepackage{comment}
\usepackage{cite}
\usepackage{mathrsfs}
\usepackage{float}
\ifCLASSINFOpdf
\else
\fi
\usepackage{amsmath}

% correct bad hyphenation here
\hyphenation{}
\newtheorem{theorem}{Theorem}
\theoremstyle{definition}
\newtheorem{defn}{Definition} 

\newtheorem{rem}{Remark}
\begin{document}
\title{Storage-Latency Trade-off in Cache-Aided\\ Fog Radio Access Networks}
\providecommand{\keywords}[1]{\textbf{\textit{Index terms---}} #1}

% author names and affiliations
% use a multiple column layout for up to three different
% affiliations
\author{\IEEEauthorblockN{Joan S. Pujol Roig}
\IEEEauthorblockA{
Imperial College  London\\
jp5215@imperial.ac.uk}

\and

\IEEEauthorblockN{Filippo Tosato}
\IEEEauthorblockA{Toshiba Research Europe\\
filippo.tosato@toshiba-trel.com}
\and
\IEEEauthorblockN{Deniz G\"{u}nd\"{u}z}
\IEEEauthorblockA{
Imperial College  London\\
d.gunduz@imperial.ac.uk}}

\maketitle

\begin{abstract} %\makeatletter{\renewcommand*{\@makefnmark}{}
%\footnotetext{This work was partially funded by the European Research Council through Starting Grant BEACON (agreement No. 677854).}\makeatother
%}

A fog radio access network (F-RAN) is studied, in which $K_T$ edge nodes (ENs) connected to a cloud server via orthogonal fronthaul links, serve $K_R$ users through a wireless Gaussian interference channel. Both the ENs and the users have finite-capacity cache memories, which are filled before the user demands are revealed. While a centralized placement phase is used for the ENs, which model static base stations, a decentralized placement is leveraged for the mobile users. An achievable transmission scheme is presented, which employs a combination of interference alignment, zero-forcing and interference cancellation techniques in the delivery phase, and the \textit{normalized delivery time} (NDT), which captures the worst-case latency, is analyzed. 
%The proposed transmission scheme considers the interplay between ENs' caches, users' caches and the transmission rates over the backhaul links, and is studied for both \textit{serial} and \textit{pipelined} transmissions. %Performance comparison with existing literature is also provided.
\end{abstract}

%\keywords{F-RAN, cache-aided networks, decentralized caching, interference management, normalized delivery time.}

\IEEEpeerreviewmaketitle
\section{Introduction}
%With the explosion of multimedia content and social networks, wireless network traffic is experiencing a deep transformation as it is increasingly becoming dominated by video content. To exploit the features of the video traffic and to cope with the exponential growth of future network traffic, \textit{proactive content caching} has emerged as a promising technique. In cache-aided networks, popular contents are placed in memories distributed across the network during low traffic periods (\textit{placement phase}), and the goal is to use these memories to serve the users during peak traffic periods (\textit{delivery phase}) so that caching can replace, or reduce, the need for back-haul communications.

In their pioneering work \cite{maddah2014fundamental}, Maddah-Ali and Niesen showed that proactive caching at user terminals combined with coded delivery over an error-free shared link can significantly reduce the amount of data that needs to be transmitted over the shared link, compared to traditional uncoded caching and unicast delivery of demands. They proposed a novel centralized coded caching scheme, which creates and exploits multicasting opportunities across users, significantly reducing the required delivery rate. Benefits of coded caching extend to the decentralized setting, where users cache bits independently from one another \cite{maddah2015decentralized}, \cite{amiri2016decentralized}.

An architecturally dual setting is considered in \cite{maddah2015cache}, where caching is employed at the transmitter side. In this model multiple cache-aided transmitters deliver content over a wireless channel. In  \cite{naderializadeh2016fundamental}, authors address interference management in a cache-aided network with an arbitrary number of cache-enabled transmitters and users. The proposed delivery scheme makes use of zero-forcing (ZF) techniques as well as interference cancellation (IC) to satisfy users' demands. A constant-factor approximation to the \textit{sum degrees-of-freedom} ($\mathrm{sDoF}$) in a $K_T\times K_R$  cache-aided interference network with caches at both ends is provided in \cite{ursniesen}, by using a combination of interference alignment (IA) and IC techniques. Cache-aided interference networks with caches at both ends are studied in \cite{xu2016fundamental} for centralized cache placement, and in \cite{roig2017interference} for centralized cache placement at the transmitters and decentralized cache placement at the users.

\par
Note that in the interference channel model studied in the aforementioned papers, transmitters must be capable of caching all the library collectively to be able to satisfy all demand combinations. Instead, in the cloud-aided fog radio access network (F-RAN) model studied in \cite{sengupta2016cloud}, edge nodes (ENs) can fetch contents from the cloud through finite-capacity frounthaul links. The normalized delivery time (NDT) is studied in \cite{sengupta2016cloud} by exploiting a centralized placement phase and a delivery phase that leverages ENs' caches as well as the cloud links. In \cite{girgis2017decentralized}, an F-RAN architecture is considered with decentralized cache placement at both the ENs and the users, and an achievable scheme is proposed for two ENs and an arbitrary number of users. The authors in \cite{ding2017network} characterize the achievable NDT for an F-RAN with a shared cloud link and centralized cache placement at both the ENs and users. 
\par

%%%%%%%%%%%%%%%%%%%%%%%%%%
\begin{figure}
\centering
\includegraphics[width=0.5\textwidth]{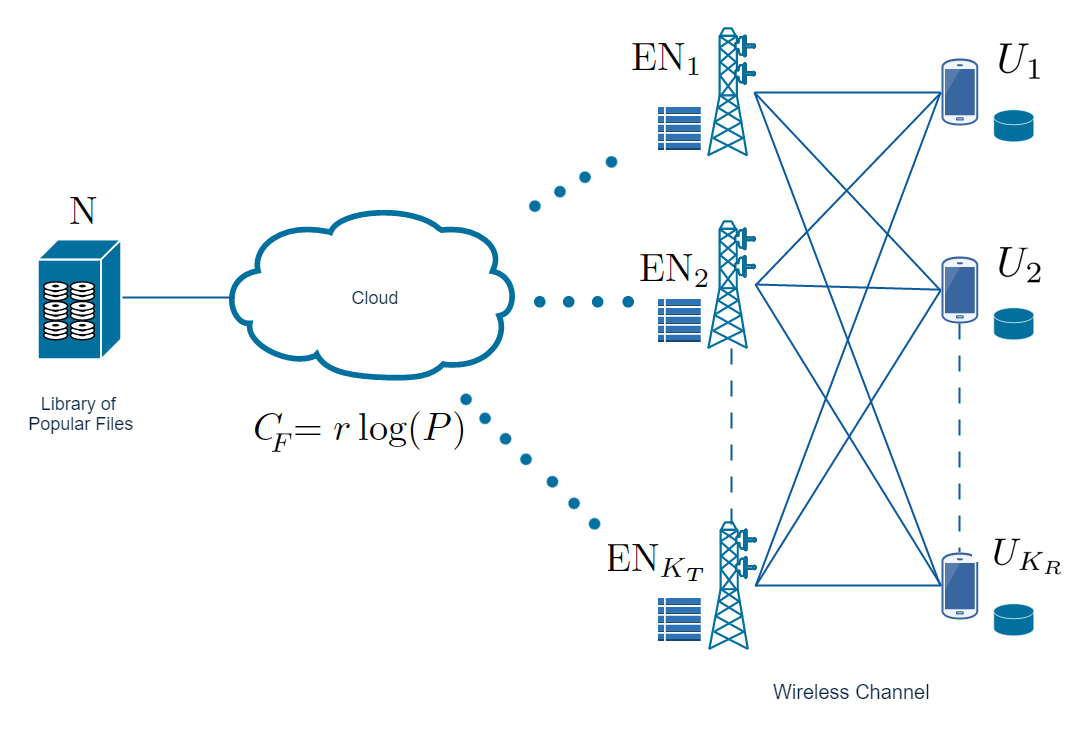}
\caption{The $K_T \times K_R$ cloud- and cache-aided F-RAN architecture with caches at both the ENs and the users.}
\label{figura 1}
\end{figure}
%%%%%%%%%%%%%%%%%%%%%%%%%%

In this work, we consider an F-RAN consisting of single antenna terminals with cache capabilities at both the ENs and the users. Our model considers decentralized placement at the users' caches, while caching at the ENs is centralized. Centralized coordination of the cache contents at the ENs, which model fixed base stations, is a reasonable assumption, while decentralized cache placement is needed for mobile users roaming around. We propose a new decentralized delivery scheme for F-RANs based on the decentralized cache placement ideas presented in \cite{roig2017interference} and the \textit{soft-transfer} delivery scheme of \cite{sengupta2016cloud}. This achievable scheme aims to minimize the NDT taking into account the interplay between the ENs' caches, users' caches and the capacity of the cloud links. The proposed delivery scheme jointly exploits IA, ZF, IC as well as the ENs' fronthaul links, and is studied for both \textit{serial} and \textit{pipelined} transmissions.
\par
In comparison with \cite{sengupta2016cloud}, where authors consider an F-RAN with caches only at the ENs, our model also considers caches at the user side, similarly to \cite{girgis2017decentralized,ding2017network}. However, \cite{girgis2017decentralized} considers decentralized placement for all the network's caches, including those at the ENs, and is limited to two ENs; whereas we propose an achievable scheme for an arbitrary number of ENs and users. Unlike \cite{ding2017network}, we leverage a decentralized placement phase for the users, and study dedicated cloud  links to each of the ENs.  Moreover, in contrast to \cite{sengupta2016cloud}, we do not assume knowledge of the capacity of the cloud links during the placement phase, a more realistic assumption since the future back-haul congestion (hence, the cloud link capacity) is unknown during off-peak traffic periods. Finally, our delivery scheme leverages a combination of IA, ZF and IC, compared to exploiting either IA or ZF or IC (in the presence of user caches) as in the delivery schemes of \cite{sengupta2016cloud, ding2017network, girgis2017decentralized}.

%%%%%%%%%%%%%%%%%%%
\section{System Model}\label{s:System_Model}

We consider an F-RAN architecture with $K_T$ ENs, $\mathrm{EN}_{1}, \dots ,\mathrm{EN}_{K_T}$, and $K_R$ users $ \mathrm{U}_1, \dots ,\mathrm{U}_{K_R}$ (see Figure \ref{figura 1}). A cloud server holds a library of $N\geq K_R$ popular files, $\mathbf{W}\triangleq (W_{1}, W_{2},\dots ,W_{N})$, each of size $F$ bits. Each EN and each user is equipped with a cache memory of size $M_TF$ and $M_RF$ bits, respectively. We refer to the global normalized cache size at the ENs and users as $t_T\triangleq K_TM_T/N$ and $t_R\triangleq K_RM_R/N$, respectively, where $t_T \in [0, K_T]$ and $t_R \in [0, K_R]$. Furthermore, each of the ENs is connected to the cloud server via a dedicated fronthaul link of capacity $C_{F}$ bits per use of the wireless channel. 
\par

In the \textbf{\textit{placement phase}}, all the caches in the network are filled without the knowledge of users' demand or the value of $C_F$. The cache contents of $\mathrm{EN}_i$ and $\mathrm{U}_j$ at the end of the placement phase are denoted, respectively, by a binary sequence $P_i$ of length $M_TF$, $\forall i \in [K_T] \triangleq \{1, \ldots, K_T\}$, and a binary sequence $Q_j$ of length $\lfloor M_RF \rfloor$, $\forall j \in [K_R]$. The cache placement function that maps the library to the EN cache contents in a \textbf{centralized} manner is known by all the ENs, while each EN knows only the contents of its own cache. On the other hand, users leverage a \textbf{decentralized} placement phase, and each user caches an equal number of bits randomly from each file in the library. The number of users that will take part in the delivery process as well as their cache sizes is unknown during this phase. \par

The users reveal their requests at the beginning of the \textbf{\textit{delivery phase}}. Let $W_{d_j}$ denote the file requested by $\mathrm{U}_j ,\ \forall j \in [K_R]$, and $\textbf{d}\triangleq [d_1,\dots, d_{K_R}] \in [N]^{K_R}$ denote the demand vector. The delivery phase takes place over an independent and identically distributed additive white Gaussian noise interference channel. The signal received at $\mathrm{U}_j$ at time $t$ is:
\begin{equation}
Y_j(t)=\sum_{i=1}^{K_T}h_{ji}X_i(t)+Z_j(t),
\end{equation}
where $X_i(t)\in \mathbb{C}$ represents the signal transmitted by $\mathrm{EN}_{i}$, $h_{ji}\in \mathbb{C}$ represents the channel coefficient between user $j$ and $\mathrm{EN}_i$, and $Z_j(t)$ is the additive Gaussian noise term at $\mathrm{U}_j$. We assume that the channel coefficients $\mathbf{H} \triangleq \{h_{i,j}\}_{i\in[K_R], j\in[K_T]}$, and the demand vector $\textbf{d}$ are known by all the ENs and users. 
\par
The cloud server maps the demand vector $\textbf{d}$, the library $\mathbf{W}$ and the channel matrix $\mathbf{H}$ to message $\mathbf{U}_i$ of length $L_F$,  $\mathbf{U}_i \triangleq [U_i(1), \ldots, U_i(L_F)]$, for $i \in [K_T]$, which is sent to $\mathrm{EN}_i$ through the fronthaul link. $L_F$ is normalized to the symbol transmission duration over the downlink wireless channel; and therefore, the message $\mathbf{U}_i$ to $\mathrm{EN}_i$ is limited to $L_FC_F$ bits. $\mathrm{EN}_i$, $\forall i\in [K_T]$, maps $\mathbf{d}$, $\mathbf{U}_i$, $\mathbf{H}$, and its own cache contents $P_i$ to a channel input vector of length $L_E$, $\mathbf{X}_i = [X_i(1), \ldots, X_i(L_E)]$. We impose an average power constraint $P$ on each transmitted codeword, i.e., $\frac{1}{L_E}  \| \mathbf{X}_i\| ^2 \leq P$. 

User $\mathrm{U}_j$, $\forall j\in [K_R]$, decodes its desired file $W_{d_j}$ using $\mathbf{d}, \mathbf{H}$, its own cache content $Q_j$, and the corresponding channel output $\mathbf{Y}_j= [Y_j(1), \ldots, Y_j(L_E)]$. Let $\hat{W}_{j}$ denote its estimate of $W_{d_j}$.The error probability is defined as:
\begin{align}
    P_e = \max_{\mathbf{d}\in [N]^{K_R}} \max_{j \in [K_R]} \mathrm{Pr}\left( \hat{W}_{j} \neq W_{d_j}\right).
\end{align} 
We now introduce the performance measure, NDT, which accounts for the worst-case latency in the delivery phase \cite{zhang2017fundamental}, \cite{sengupta2016cloud}.

\begin{defn}\label{def1}
Delivery time per bit $\Delta	(t_T,t_R,C_F,P)$ is \textit{achievable}, if there exists a sequence of codes, indexed by file size $F$, such that $P_e \rightarrow 0$ as $F \rightarrow \infty$, and
\begin{equation}
\Delta	(t_T,t_R,C_F,P)=\liminf _{F \rightarrow \infty}\frac{T(L_F,L_E)}{F},
\end{equation}
where $T(L_E,L_F)$ accounts for the end-to-end latency, and depends on the transmission approach considered (see Definitions \ref{definition2} and \ref{definition3} below).
\end{defn}

\begin{defn} \cite{sengupta2016cloud} 
For a given family of codes achieving a delivery time per bit of $\Delta	(t_T,t_R,C_F,P)$, and a fronthaul link capacity that scales as $C_F=r\log P$, the \textit{normalized delivery time} (NDT) of the family of codes in the high SNR regime is defined as:
\begin{equation}
\delta (t_T,t_R,r) \triangleq \lim_{P \rightarrow \infty}\frac{\Delta	(t_T,t_R,r\log P,P)}{1/\log P}.
\end{equation}
\end{defn}

%The minimum NDT for given $(M_T,M_R,r)$ is defined as:
%\begin{align*}
%\delta ^*(M_T,M_R,r) \triangleq \inf \{ \delta(M_T,M_R,r): 
%\delta(M_T,M_R,r)\ is \ \mbox{achievable}\}.
%\end{align*}

Our goal in this paper is to characterize the minimum achievable NDT for a given network. We will refer to the fronthaul-NDT as $\delta _F$ and the edge-NDT as $\delta _E$, which are determined by $L_F$ and $L_E$, respectively. Following \cite{sengupta2016cloud}, we study two types of transmission approaches \textit{serial} and \textit{pipelined}, as explained below.

\begin{defn}\label{definition2} 
In \textit{serial transmission}, the fronthaul and edge transmissions occur successively; that is, first the cloud server transmits all the $U_i$ messages to the ENs, after which the transmission of the $X_i$ messages over the wireless channel starts, so that $T(L_F,L_E)=L_F+L_E$. Hence, the NDT is given by $\delta _S= \delta _F + \delta _E$.
\end{defn}

\begin{defn}\label{definition3} 
In \textit{pipelined  transmission},  the ENs can simultaneously receive information from the cloud  server through the fronthaul links, and transmit information to the users through the wireless channel. Thus, $\mathrm{EN}_i$ can start the transmission of $X_i$  before the reception of $U_i$ is completed. Using the strategy defined in \cite{sengupta2016cloud} for this model of transmission, we have $T(L_F,L_E)=\max \{ L_F,L_E\}$, and the NDT  is given by $\delta _P= \max \{ \delta _F, \delta _E \}$.
\end{defn}

%%%%%%%%%%%%%%%%%%%%%%%%
%%%%%%%%%%%%%%%%%%%%%%%%
\section{Proposed caching and transmission scheme}\label{s:Centralized}
In this section, an achievable scheme for a $K_T\times K_R$ cache-aided F-RAN with centralized cache placement at the ENs and decentralized cache placement at the users is proposed.
\subsection{Placement Phase}\label{s:Decentralized}
The users leverage a decentralized placement phase, which allows us to exploit coded delivery without relying on centralized planning of the cache contents. To this end, each user fills its cache with randomly chosen $M_RF/N$ bits of each file, so that the cache capacity constraint is met. On the other hand, the ENs, which correspond to stationary base stations, leverage the following centralized placement scheme (see Figure \ref{placement}): when $t_T<1$ each $EN_i,i\in [K_T]$, stores $M_TF/N$ non-overlapping bits from each file in the library, and the remainders of the files are accessible only from the cloud server through the fronthaul links. On the other hand, when $t_T\geq1$, each file of the library is split into two parts, such that, one of the parts is stored by all the ENs while the other part is stored collectively across the ENs (each EN caches a distinct part). As a result, each EN stores $(1-M_T/N)F/(K_T-1)$ non-overlapping bits of  each file of the library plus the same $(t_T-1)F/(K_T-1)$ bits of each file, fulfilling the memory size constraint. Unlike \cite{sengupta2016cloud}, the fronthaul link capacity is assumed unknown during the placement phase; therefore, the placement cannot be optimized based on $C_F$. This also means that the delivery when $t_T <1$ is not feasible if $C_F=0$.

After the placement phase, if $M_R>0$, each file in the library is further divided into $2^{K_R}$  \textit{subfiles}. We denote the subfile of file $i \in [N]$ stored at $EN_k$, $\forall k \in \mathcal{S}_T$, and at users $K_j$, $\forall j \in \mathcal{S}_R$, by $W_{i,\mathcal{S}_T,\mathcal{S}_R}$, where $\mathcal{S}_T\subset [K_T]$, with size $|\mathcal{S}_T | \in \left\lbrace 1, K_T\right\rbrace$, and  $\mathcal{S}_R\subset [K_R]$ of size $|\mathcal{S}_R |\in [ K_R ]$. Consider, for example, $K_R=3,\ M_R=1,\ K_T=3,\ |\mathcal{S}_T |=1$ and $N=3$. According to the placement phase explained above, file $W_1$ is divided into 24 subfiles as follows:
\begin{eqnarray*}
W_{1,1,\emptyset},\ W_{1,1,1},\ W_{1,1,2},\ W_{1,1,3},\ W_{1,1,12},\ W_{1,1,13},\\ W_{1,1,23},\ W_{1,1,123},\ W_{1,2,\emptyset},\ W_{1,2,1}\ W_{1,2,2},\ W_{1,2,3},\\ W_{1,2,12},\ W_{1,2,13},\  W_{1,2,23},\ W_{1,2,123},\ W_{1,3,\emptyset},\ W_{1,3,1},\\ W_{1,3,2},\ W_{1,3,3},\ W_{1,3,12},\ W_{1,3,13},\ W_{1,3,23}, W_{1,3,123}.
\end{eqnarray*}
% subfile $W_{1,12,\emptyset}$ represents the subfile of file $W_1$ stored at $\mathrm{EN}_1$ and $\mathrm{EN}_2$, and at none of the users, while 
In the previous notation, subfile $W_{1,1,13}$ denotes the subfile of file $W_1$ stored at $\mathrm{EN}_1$, and users $\mathrm{U}_{1}$ and $\mathrm{U}_{3}$. Same partition applies to files $W_2$ and $W_3$. 
\begin{rem}\label{remark1}By the law of large numbers, the size of the subfile that is stored by $j$ out of $K_R$ users, each of them  caching $M_RF/N$ bits from that file, can be approximated by
\begin{align}
    F'(j)\approx \left(\frac{M_R}{N}\right)^j\left( 1-\frac{M_R}{N}\right)^{K_R-j}F\ \mbox{bits}.
\end{align}
\end{rem}
 The total size of the subfiles of a file that need to be transmitted to a user requesting that file in the \textit{delivery phase}, that is, the subfiles which have not been stored in the cache of the requesting user, is given by
\begin{align}
\label{e:size}
\begin{split}
   F'_{r}\triangleq \sum _{j=0}^{K_R-1}\binom{K_R -1}{j}F'(j)\ bits.
\end{split}
\end{align}

\begin{figure}
\centering
\includegraphics[width=0.4\textwidth]{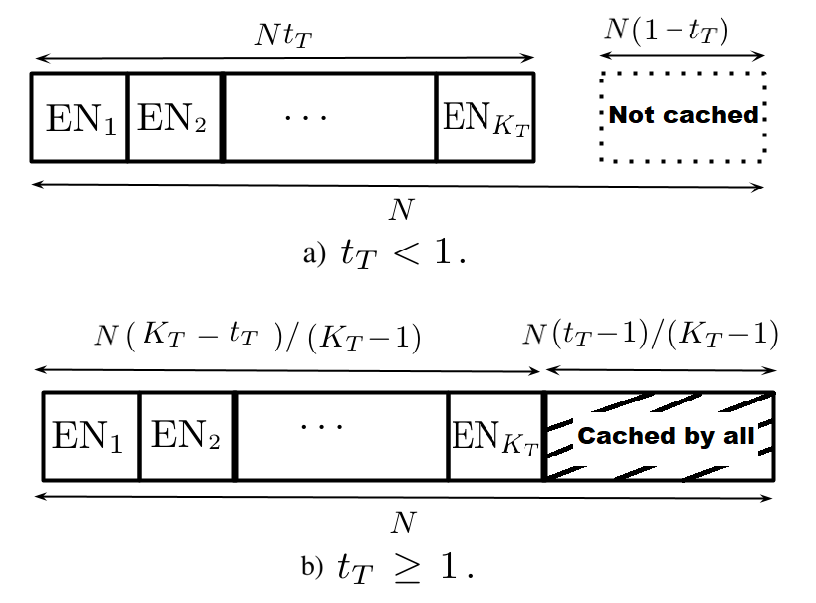}
\caption{EN placement phase.}
\label{placement}
\end{figure}

We reemphasize that the proposed placement phase is independent of the user cache capacities, or the fronthaul links capacities. The proposed delivery scheme exploits a combination of IA and IC, or ZF and IC (similarly to the the decentralized transmission approach in \cite{roig2017interference}).\par  
We highlight the following observations that come from the availability of caches at the users. When implementing IA, we can exploit the cache contents of the users to reduce the number of interfering dimensions at each user. Furthermore, each of the subfiles of a file will achieve a different NDT, i.e., the subfiles that are cached at a single user will achieve a higher NDT than those cached at $K_R-1$ users. The following expression provides the NDT  achieved for the delivery of subfiles stored in $j$ out of $K_R$ users, using a combination of IA and IC:
\begin{equation} \label{eqmax}
\delta _{IA}(j)=\frac{\binom{K_R -1}{j}\ K_R}{\max \left\lbrace \frac{K_TK_R}{K_T+K_R-j},j+1\right\rbrace}F'(j).
\end{equation}
In the numerator we have the total size of the subfiles that will be transmitted, while the denominator is the achievable \textit{sum degrees-of-freedom}. The first argument in the $\max$ in (\ref{eqmax}) corresponds to the well-known expression of the DoF achievable by IA in an X-channel.  If the subfiles are carefully grouped (as in \cite{roig2017interference}) for transmission,  the number of interfering dimensions can be reduced by $j$ thanks to the users' cache contents. The second argument corresponds to the joint transmission of subfiles. Consider, for example, the subfiles $W_{1,1,2}$ and $W_{2,2,1}$, requested by $\mathrm{U}_1$ and $\mathrm{U}_2$, respectively. These subfiles can be transmitted simultaneously, as $\mathrm{U}_1$ can cancel $W_{2,2,1}$ (available in its cache) and $\mathrm{U}_2$  can cancel $W_{1,1,2}$. \par
With ZF, the users' cache contents play a similar role, the number of users at which interference can be nullified is increased due to the side information available at the users. As a result, the following expression  provides an upper-bound on the NDT  for the transmission of the subfiles cached by $K_T$ transmitters and $j$ out of $K_R$ users, leveraging a combination of ZF and IC:
\begin{equation}\label{eqnzf}
\delta _{ZF}(j)=\frac{\binom{K_R -1}{j}\ K_R}{\min \left\lbrace K_T+j,K_R\right\rbrace}F'(j).
\end{equation}
Again, the numerator in (\ref{eqnzf}) corresponds to the total size of the subfiles that must be transmitted, while the denominator corresponds to the DoF. If the files to be transmitted are carefully selected, the ENs, which share the same information, can reduce the number of interfering signals at the users by $j$. Consider, for example, subfiles $W_{1,1,2}$, $W_{2,2,3}$ and $W_{3,3,1}$, requested by $\mathrm{U}_1$, $\mathrm{U}_2$ and $\mathrm{U}_3$, respectively. These subfiles can be transmitted simultaneously, and by ZF we can cancel $W_{2,2,3}$ at $\mathrm{U}_1$, $W_{1,1,2}$ at $\mathrm{U}_3$ and $W_{3,3,1}$ at $\mathrm{U}_2$. The interfering subfiles are available at the interfered user caches, so these interferences can be canceled. As a result, the desired subfiles are received interference-free with a DoF of $K_T+j$.\par
In the proposed placement phase (Section \ref{s:Decentralized}), if $t_T\geq 1$, the database is divided into two parts. The first part is divided into subfiles which are collectively cached across all the ENs, and the second part is  cached by all the ENs. The transmission from  ENs to users is carried out as a combination of IA-IC and ZF-IC, for these two parts, respectively, which achieves the following NDT:
\begin{equation}
\delta _{ZF-IA}=\sum _{j=0}^{K_R-1}\left( \frac{K_T-t_T}{K_T-1}\delta _{IA}(j)+\frac{t_T-1}{K_T-1}\delta _{ZF}(j) \right).
\end{equation}
\subsection{Delivery Phase}\label{s:delivery}
Next, we present the proposed delivery scheme for \textit{serial transmission}. All user demands must be satisfied by the end of the delivery phase.  In the rest of the paper, we assume that  each user requests a different file from the library, corresponding to the worst-case demand combination.
\par

\textbf{Edge-Only Delivery:} When fronthaul links are not available, i.e., $r=0$, all demands must be satisfied from the EN and user caches, requiring $t_T \geq 1$.  We remark that, during the placement phase, we do not know the fronthaul link capacities; and moreover, due to  decentralized cache placement we cannot guarantee any of the bits to be available at user caches; hence the requirement $t_T\geq 1$. By exploiting edge-only delivery; we can achieve an NDT of
\begin{equation}
\delta ^{e}=\delta ^e_{E}=\delta ^e_{ZF-IA},
\end{equation}
which is obtained using the combination of IA-IC and ZF-IC transmission techniques.

\textbf{Cloud-Only Delivery:} Cloud-only delivery is used  when there are no caches at the ENs, i.e.,  $t_T=0$. This requires a non-zero fronthaul link capacity, i.e., $r>0$.  For this particular network configuration, the following NDT is achievable:
\begin{equation}
\delta ^c=\delta ^c_{E}+\delta ^c_{F},
\end{equation}
where
\begin{equation*}
\begin{split}
\delta ^c_{E}&= \sum _{j=0}^{K_R-1}\frac{K_R\ \binom{K_R-1}{j}}{\min (K_R,K_T+j)}F'(j)\\
\delta ^c_{F}&=\frac{K_R }{K_T r}F_r'
\end{split}.
\end{equation*}
Where $F_r'$ is as defined in (\ref{e:size}). This NDT is achieved by using the \textit{soft-transfer mode} proposed in \cite{sengupta2016cloud} to transmit the remaining $F_r'$ bits of each of the $K_R$ requested files,  where the cloud server implements ZF-beamforming and the resulting encoded signals are quantized and transmitted to the ENs. 

\textbf{Joint Edge and Cloud-Aided Delivery:}  In general, both the fronthaul links and the EN caches should be used to deliver the requested files, when the ENs cannot store the whole database collectively ($0<t_T< 1$) . With the placement phase of Section  \ref{s:Decentralized},  part of the requested files are available in each of the ENs, while the rest of them will be sent through the fronthaul links. The subfiles that are available  at the EN caches are transmitted using the IA and IC techniques (Section \ref{s:Decentralized}), and the rest through the \textit{soft-transfer} scheme.
Therefore, the achievable NDT is given by:
\begin{align}
\delta ^{h}=\sum _{j=0}^{K_R-1}t_T\delta _{IA}(j)+(1-t_T)\delta ^c,
\end{align}
where
\begin{equation}
\begin{cases}
\delta ^{h}_E&=\sum _{j=0}^{K_R-1}t_T\delta _{IA}(j) +(1-t_T)\delta ^c_{E} \\
\delta ^{h}_F&=(1-t_T)\delta ^c_{F}
\end{cases}.
\end{equation}

\section{Main Results}
The following theorems provide an upper-bound on the achievable NDT  for \textit{serial} and \textit{pipelined} transmissions. 
\begin{theorem}
For a $K_T \times K_R$ F-RAN with centralized placement at the ENs and decentralized placement at the users, and a fronthaul capacity of $r\geq 0$, the following NDT is achievable with serial transmission:
\end{theorem}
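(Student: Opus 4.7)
The plan is to read the statement as a consolidation of the three delivery strategies constructed in Section~\ref{s:delivery}, and to organize the proof as a case analysis on $t_T$ in which the announced bound is verified in each regime by a subfile-level accounting of the edge and fronthaul loads.

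I would first handle the extreme regimes. For $t_T\geq 1$ I would invoke the edge-only construction: the centralized EN placement splits each file into a fraction $(t_T-1)/(K_T-1)$ replicated at every EN and a fraction $(K_T-t_T)/(K_T-1)$ stored collectively; the first portion is delivered by ZF--IC at the per-class cost $\delta_{ZF}(j)$ and the second by IA--IC at $\delta_{IA}(j)$, so summing over $j\in\{0,\dots,K_R-1\}$ with the user-side binomial weights already baked into (\ref{eqmax}) and (\ref{eqnzf}) gives $\delta^e=\delta_{ZF-IA}$, and no fronthaul is needed. For $t_T=0$ I would use the soft-transfer construction of \cite{sengupta2016cloud}: the cloud ZF-precodes the $K_R$ user streams (with the $j$ cache-equipped users exploited via $\min\{K_R,K_T+j\}$), quantizes the $K_T$ resulting signals at rate $r\log P$ per symbol and ships them over the fronthauls; this yields $\delta^c_E$ together with a fronthaul NDT $\delta^c_F=K_RF'_r/(K_Tr)$ arising from $K_RF'_r$ payload bits spread over $K_T$ parallel links of rate $r\log P$.

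For the hybrid regime $0<t_T<1$ I would exploit that the EN placement splits each file into an edge fraction of $t_TF$ non-overlapping bits and a cloud fraction of $(1-t_T)F$ bits, and that this split commutes with the user-side subfile partition $\{W_{i,k,\mathcal S_R}\}$. Hence every subfile decomposes linearly: the edge part is delivered by IA--IC at cost $t_T\delta_{IA}(j)$ per user-cache class, while the cloud part is delivered by reapplying the $t_T=0$ scheme to a shrunk library of $(1-t_T)F$-bit files, contributing $(1-t_T)\delta^c$. Adding the edge and fronthaul contributions and invoking Definition~\ref{definition2} then yields $\delta_S=\delta^h_E+\delta^h_F$ as announced.

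The main obstacle is to certify that the DoF figures (\ref{eqmax}) and (\ref{eqnzf}) remain achievable under the decentralized user placement of Section~\ref{s:Decentralized}. For IA this amounts to showing that grouping subfiles by the user index set $\mathcal S_R$ lets the $j$ already-caching users cancel $j$ interfering streams before IA is applied, reducing an X-channel of effective size $K_T\times(K_R-j)$ with sum-DoF $K_TK_R/(K_T+K_R-j)$, or alternatively transmitting cache-matched subfiles jointly with sum-DoF $j+1$; both steps are imported from \cite{roig2017interference}, so the remaining check is that the approximate subfile sizes of Remark~\ref{remark1} enter linearly, which holds up to $o(F)$ terms that vanish as $F\to\infty$. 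The ZF argument is analogous and produces the $\min\{K_T+j,K_R\}$ DoF in (\ref{eqnzf}). With these achievability facts in hand, the residual bookkeeping — in particular the identity $\delta^h_F=(1-t_T)\delta^c_F$ in the hybrid regime — is routine.
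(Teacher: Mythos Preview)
Your proposal correctly re-derives the achievability of the three individual NDTs $\delta^e$, $\delta^c$, and $\delta^h$ via the constructions of Section~\ref{s:delivery}, and in fact does so in more detail than the paper's own proof, which simply points back to that section. However, you treat the three delivery modes as a \emph{partition} of the $t_T$-axis---edge-only for $t_T\geq 1$, hybrid for $0<t_T<1$, cloud-only only at $t_T=0$---and conclude, for instance, ``$\delta_S=\delta^h_E+\delta^h_F$ as announced'' in the hybrid regime. But the theorem's announced value is $\min\{\delta^h,\delta^c\}$ for $t_T\leq 1$ and $\min\{\delta^e,\delta^c\}$ for $t_T\geq 1$; the $\min$ is never addressed in your argument.

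The missing step, which is essentially the entire content of the paper's short proof, has two parts. First, cloud-only delivery achieves $\delta^c$ for \emph{every} value of $t_T$, not just at $t_T=0$: one may simply ignore the EN cache contents and soft-transfer all $K_RF'_r$ payload bits, so $\delta^c$ is always a valid option. Second, because the fronthaul rate $r$ is revealed at the start of the delivery phase (even though it was unknown during placement), the scheme may at that moment select whichever of the two applicable strategies---cloud-only versus edge-only when $t_T\geq 1$, or cloud-only versus hybrid when $t_T\leq 1$---yields the smaller NDT, which gives the $\min$. Without this observation your argument establishes only the weaker bounds $\delta_S\leq\delta^h$ and $\delta_S\leq\delta^e$ in the respective regimes, which are strictly worse than the theorem's claim whenever $r$ is large enough that $\delta^c$ dominates.
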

\begin{equation}
\delta _S=
\begin{cases}
\min \{ \delta ^{h}, \delta^{c}\}\text{ if }t_T\leq 1\\
\min \{ \delta ^{e}, \delta^{c}\}\text{ if }t_T\geq 1\\
\end{cases}.
\end{equation}

\begin{proof}
In serial transmission, the total NDT is the sum of the fronthaul  ($\delta_F$) and edge ($\delta_E$) delays, which corresponds to the minimum of the NDTs of the \textit{cloud-only delivery} or \textit{edge only delivery} schemes when $t_T<1$; and the minimum of the NDTs of the \textit{cloud-only delivery} or \textit{joint edge and cloud-aided delivery} when $t_T\geq 1$.  Once the fronthaul link capacity is revealed, the best transmission scheme is chosen based on the fronthaul rate and the EN cache size. If fronthaul rates are low, e.g., high network congestion,  \textit{edge-only} delivery will be leveraged if $t_T<1$, or \textit{joint edge and cloud-aided} delivery if $t_T\geq 1$. On the other hand, if the fronthaul capacity is high, \textit{cloud-only} approach outperforms the two other schemes.
\end{proof}

\begin{theorem}
For a $K_T \times K_R$ F-RAN with centralized placement at the ENs and decentralized placement at the users, and a fronthaul link capacity of $r \geq 0$, the following NDT with pipelined transmission is achievable:
\begin{equation}
\delta_P=
\begin{cases}
\min \{\max \{ \delta ^{h}_F, \delta^{h}_E\},\max \{\delta^{c}_F, \delta^{c}_E\}\}&\text{ if }t_T\leq 1\\
\min \{\max \{ \delta ^{c}_E, \delta^{c}_E\},\delta^{e}_E\}&\text{ if }t_T\geq 1
\end{cases}. \nonumber
\end{equation}
\end{theorem}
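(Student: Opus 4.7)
The plan is to reuse the three achievable schemes already constructed in Section \ref{s:delivery} (\emph{edge-only}, \emph{cloud-only}, and \emph{joint edge and cloud-aided}), invoke Definition \ref{definition3} to convert each into a pipelined NDT, and then take the minimum over the schemes that are feasible for the given $t_T$ regime. Concretely, for any achievable scheme we have already identified the fronthaul component $\delta_F$ and edge component $\delta_E$; Definition \ref{definition3} tells us that the pipelined NDT achieved by that same scheme equals $\max\{\delta_F,\delta_E\}$ rather than the serial sum $\delta_F+\delta_E$. So the work reduces to applying this substitution to each feasibility case.

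First I would handle the regime $t_T \leq 1$: both the \emph{joint edge and cloud-aided} scheme (with its pair $\delta^{h}_F,\delta^{h}_E$) and the \emph{cloud-only} scheme (with its pair $\delta^{c}_F,\delta^{c}_E$) are feasible, so pipelining each one yields pipelined NDTs of $\max\{\delta^{h}_F,\delta^{h}_E\}$ and $\max\{\delta^{c}_F,\delta^{c}_E\}$ respectively, and we retain whichever is smaller once $r$ is revealed. Next, for $t_T \geq 1$, \emph{edge-only} becomes feasible: its fronthaul load is zero, so the pipelined NDT collapses to $\delta^{e}_E$, and we compare it against the pipelined cloud-only value $\max\{\delta^{c}_F,\delta^{c}_E\}$ (there is a clear typo in the second branch of the statement: the inner $\delta^{c}_E$ should read $\delta^{c}_F$). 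Taking the minimum across the two feasible options gives the stated expression.

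The only non-trivial piece is justifying that the constructed schemes of Section \ref{s:delivery} really can be pipelined in the sense of Definition \ref{definition3}, i.e.\ that each EN can start its wireless transmission before its fronthaul message $\mathbf{U}_i$ has fully arrived, without breaking information causality. For the cloud-only and the soft-transfer portion of the joint scheme, this is inherited directly from \cite{sengupta2016cloud}: the quantized ZF-beamformed samples can be delivered in time-aligned blocks so that each wireless symbol is transmitted only after its corresponding fronthaul samples are available. For the purely edge-served portion of the joint scheme (the IA/IC transmission of subfiles already cached at the ENs), no fronthaul traffic is involved at all, so pipelining is immediate. I expect this causality/scheduling check to be the only real obstacle; all other steps are mechanical substitutions of $\max$ for $+$ in the NDT accounting of Theorem~1.
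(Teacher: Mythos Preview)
Your proposal is correct and follows essentially the same approach as the paper: invoke the pipelined NDT result of \cite{sengupta2016cloud} (Definition~\ref{definition3}) to replace $\delta_F+\delta_E$ by $\max\{\delta_F,\delta_E\}$ for each scheme, and then reuse the achievability of the individual $\delta_F$ and $\delta_E$ established in Theorem~1. The paper's own proof is the two-sentence version of exactly this; your additional causality/scheduling discussion and the typo observation (the inner $\delta^{c}_E$ should be $\delta^{c}_F$) are welcome elaborations but not departures.
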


\begin{proof}
From the results in \cite{sengupta2016cloud} for this type of transmission, we only need to prove the achievability of the fronthaul and edge delays, which follow from Theorem 1.
\end{proof}

\section{Numerical results}
In this section we present the comparison of the achievable NDT of the proposed caching and delivery scheme with those in \cite{sengupta2016cloud}, referred to as STS, in \cite{ding2017network}, referred to as DYL, and in \cite{girgis2017decentralized}, referred to as GENE, for cloud and cache aided F-RAN.

%----------------
\begin{figure}[]
\centering
\includegraphics[width=0.5\textwidth]{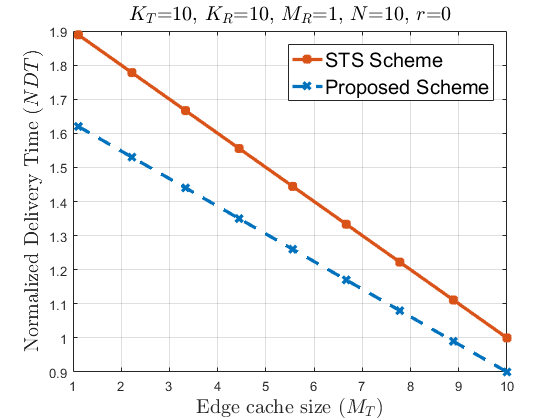}
\caption{NDT vs. $M_T$ for edge-only delivery.}
\label{no-cloud}
\end{figure}
%----------------

We first consider edge-only delivery, i.e., $r=0$, by assuming $M_TK_T\geq N$, or equivalently, $t_T\geq 1$. In Figure \ref{no-cloud} we compare the NDT of the proposed scheme for $M_R=1$ with STS, which does not take advantage of the user caches. The transmission scheme in \cite{girgis2017decentralized} and \cite{ding2017network} are omitted as they require the fronthaul links. The figure illustrates the gains from user caches in terms of the NDT in an F-RAN. We observe that as the EN cache size increases, the performance improvement of the proposed scheme  shrinks. This is because, as $M_T$ increases the delivery scheme exploits ZF, and the benefit of user caches for IC diminishes, and they only account for local caching gain. However, for limited $M_T$ we observe that user caches provide gains beyond local caching gains thanks to combining the IA and ZF techniques with IC. 

%----------------
\begin{figure}[]
\centering
\includegraphics[width=0.5\textwidth]{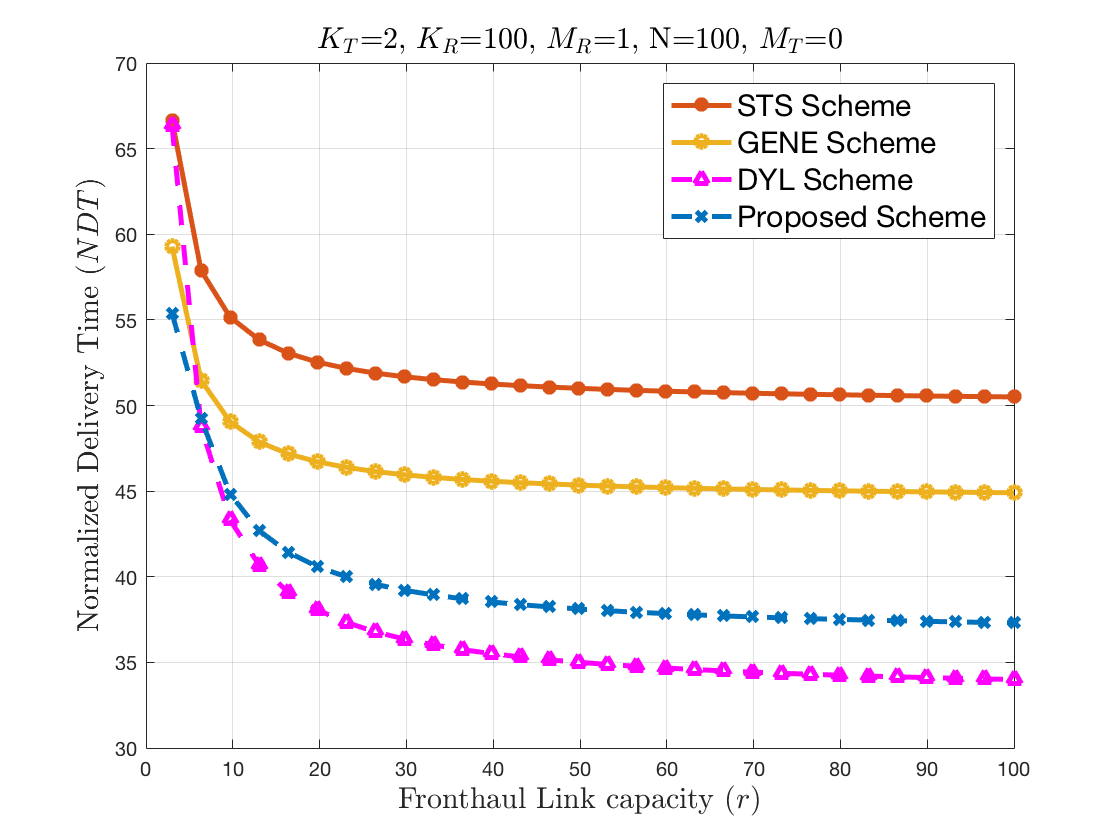}
\caption{NDT vs. founthaul link capacity ($r$).}
\label{no-cahce}
\end{figure}
%----------------

In Figure \ref{no-cahce} we consider cloud-only delivery, i.e., $M_T=0$, with serial transmission. Here, we plot the NDT performance with respect to the fronthaul link capacity $r$. We consider $K_T=2$ to be able to compare the result with that of the GENE scheme. As expected, the NDT decays with $r$, and saturates to a fixed value, which essentially characterizes the edge delay. It must be noted that the STS scheme  of \cite{sengupta2016cloud} does not exploit the user caches, while the GENE scheme assumes decentralized cache placement at the ENs; and hence, their relatively poor performance. The GENE scheme performs poorly compared to the proposed scheme even for high fronthaul link capacities, this is because GENE employs soft-transfer only for the parts of the files that are not cached anywhere in the network, whereas the proposed scheme employs a soft-transfer scheme that enables ZF at the ENs that also benefits from the receiver caches. The DYL scheme instead, exploits centralized cache placement at both the ENs and the users, and as a result it achieves a lower NDT than the other schemes for large enough $r$. The poor performance of the DYL scheme for low $r$ values is due to the shared fronthaul link assumption. 

%The tradeoff lays in the fact that \cite{ding2017network} knows in advance the number and identity of the users that will take part in the delivery phase, and during the placement phase their caches are filled accordingly. 

%----------------
\begin{figure}
\centering
\includegraphics[width=0.5\textwidth]{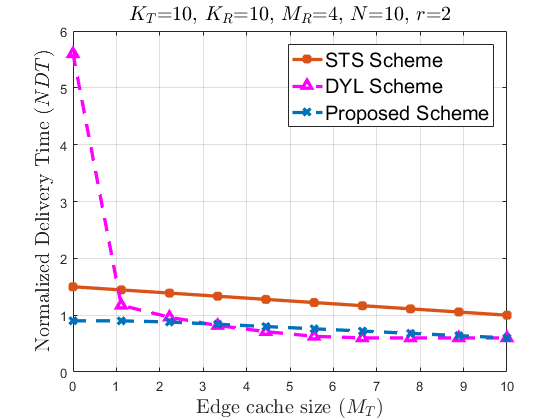}
\caption{NDT vs. $M_T$ for joint cloud and edge delivery.}
\label{general}
\end{figure}
%----------------

Joint edge  and cloud-aided delivery is considered in Figure \ref{general}.  We observe that the performance of the proposed scheme is significantly better than that of the STS scheme, thanks to the user caches, and to the exploitation of  IA, ZF and IC schemes jointly. The performance of DYL is poor at the beginning (due to the low fronthaul capacity), but thanks to the centralized placement of users' caches, it improves with $M_T$. We emphasize that, even though our scheme exploits decentralized placement at user caches, the performance gap with DYL is small. This is thanks to the exploitation of  IA, ZF and IC schemes jointly.

%The initial flat behaviour of the proposed scheme is because at that point the EN cache size is small, so \textit{cloud-only} delivery is used, whose performance does not improve with $M_T$. 

%----------------
\begin{figure}
\centering
\includegraphics[width=0.5\textwidth]{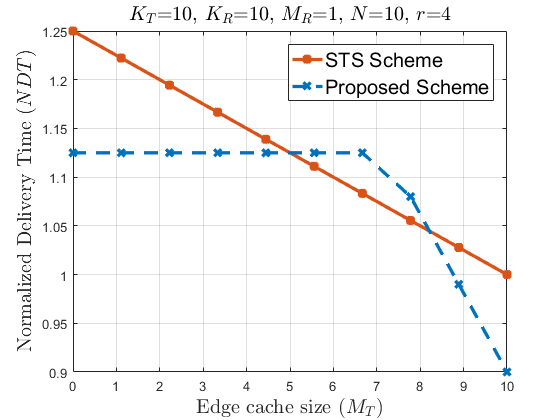}
\caption{NDT vs. $M_T$ for joint cloud and edge delivery.}
\label{general2}
\end{figure}
%----------------

We reemphasize that our scheme does not assume the knowledge of the fronthaul link capacities. This is motivated from the practical consideration that the placement and delivery phases are typically carried out over different time frames, and an accurate prediction of the fronthaul link capacities during the placement phase is too strong an assumption. The consequence of this limitation can be observed in Figure \ref{general2}. Due to the high fronthaul capacity, the STS scheme achieves a lower NDT as $M_T$ increases. The proposed scheme, on the other hand, does not start exploiting the EN caches until $M_T=7$, and employs the soft-transfer scheme before that point, whose performance does not depend on $M_T$ in this case since we have $K_T = K_R$. This is because the proposed scheme uses a placement phase that does not depend on the fronthaul capacity, whereas the STS scheme optimizes the cache placement according to the fronthaul rate.
%----------------
\begin{figure}
\centering
\includegraphics[width=0.5\textwidth]{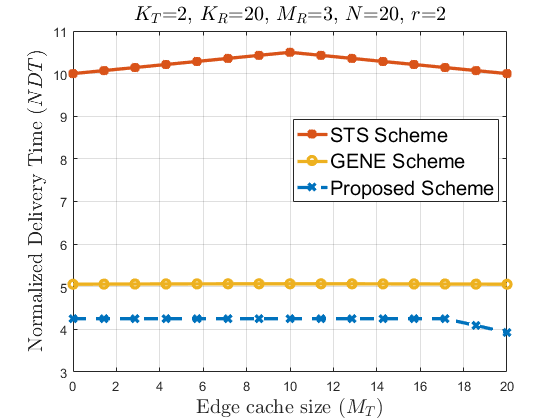}
\caption{NDT vs. MT for joint edge and cloud-aided delivery with pipelined transmission.}
\label{pipe1}
\end{figure}
%----------------
\par
Finally, we compare the NDT under pipelined transmission in Figure \ref{pipe1}. The DYL scheme is omitted as it does not consider pipelined transmission. We observe similar gains as in the serial transmission model, thanks to the user caches and centralized placement at the ENs. 

\section{Conclusions}
We have studied an F-RAN architecture with an arbitrary number of ENs and users, in which both the ENs and the users have cache capabilities. The proposed caching and delivery scheme combines IA, ZF, and IC techniques together with the soft-transfer fronthauling scheme of \cite{sengupta2016cloud}, and a comparison of the achievable NDTs with the existing literature is provided. The proposed scheme takes into account the interplay between the EN caches, user caches, and the fronthaul link capacities, and it is shown to reduce the end-to-end delay significantly for a wide range of system parameters.

% trigger a \newpage just before the given reference
% number - used to balance the columns on the last page
% adjust value as needed - may need to be readjusted if
% the document is modified later
%\IEEEtriggeratref{8}
% The "triggered" command can be changed if desired:
%\IEEEtriggercmd{\enlargethispage{-5in}}

% references section

% can use a bibliography generated by BibTeX as a .bbl file
% BibTeX documentation can be easily obtained at:
% http://mirror.ctan.org/biblio/bibtex/contrib/doc/
% The IEEEtran BibTeX style support page is at:
% http://www.michaelshell.org/tex/ieeetran/bibtex/
%\bibliographystyle{IEEEtran}
% argument is your BibTeX string definitions and bibliography database(s)
%\bibliography{IEEEabrv,../bib/paper} \textit{•}
%
% <OR> manually copy in the resultant .bbl file
% set second argument of \begin to the number of references
% (used to reserve space for the reference number labels box)

\bibliographystyle{IEEEtran}
\begin{center}
\bibliography{mybib}
\end{center}
\end{document}